\documentclass[copyright,creativecommons]{eptcs}
\providecommand{\event}{GASCom 2026}

\usepackage{iftex}

\ifpdf
  \usepackage{underscore}
  \usepackage[T1]{fontenc}
\else
  \usepackage{breakurl}
\fi

\usepackage{amsmath,amssymb,amsthm}
\usepackage{booktabs}

\newtheorem{theorem}{Theorem}
\newtheorem{proposition}[theorem]{Proposition}

\newtheorem{conjecture}[theorem]{Conjecture}
\newtheorem{definition}[theorem]{Definition}

\title{Exhaustive Generation of Genus-One Knot and Link Diagrams\\
via Maps on the Torus}

\author{Alexander Omelchenko
\institute{Constructor University Bremen\\
Campus Ring 1, 28759 Bremen, Germany}
\email{aomelchenko@constructor.university}
}

\newcommand{\titlerunning}{Exhaustive Generation of Genus-One Diagrams via Maps on the Torus}
\newcommand{\authorrunning}{A. Omelchenko}
\hypersetup{
  bookmarksnumbered,
  pdftitle={Exhaustive Generation of Genus-One Diagrams via Maps on the Torus},
  pdfauthor={A. Omelchenko},
  pdfsubject={GASCom 2026},
  pdfkeywords={enumeration algorithm, canonical labelling, combinatorial enumeration, maps on surfaces, knot tabulation, Kauffman bracket}
}

\begin{document}
\maketitle

\begin{abstract}
We present an algorithmic framework for the exhaustive generation and tabulation
of knot and link diagrams on the thickened torus $T^2\times I$,
based on the theory of maps on surfaces.
Cellular $4$-regular torus projections are encoded by permutation pairs
$(\alpha,\sigma)$, and unsensed equivalence classes are enumerated completely
and without duplication via canonical representatives.
Crossing assignments, local diagram-level reductions, and the generalized
Kauffman-type bracket are formulated entirely within the same permutation model.
The pipeline is validated against published genus-one classifications for
crossing numbers $N\le 5$ and then extended to $N=6,7,8$, producing,
to our knowledge, the first complete genus-one tabulation at these crossing
numbers under the stated comparison conventions.
The resulting dataset contains more than $33\,000$ knot and link types.
Besides the tables, the computation yields proved structural facts, including
a parity statement for the $a$-span of the bracket and a sharp upper bound
$N-1$ for the number of bigon faces in a $4$-regular torus map.
It also suggests several conjectures, among them a formula for the maximum
number of straight-ahead components, the absence of equi-quadrilateral knot
projections, and a $4N$ upper bound for the genus-one bracket span.
\end{abstract}

\section{Introduction}

Tabulation of knots and links has a long tradition going back to Tait~\cite{Tait1877},
culminating in very large classical tables~\cite{HosteThistlethwaiteWeeks1998}.
Virtual knots~\cite{Kauffman1999} and links in thickened surfaces
$\Sigma_g\times I$ provide a natural extension of this program beyond~$S^3$.
A fundamental result of Kuperberg~\cite{Kuperberg2003} asserts that every virtual link
admits a unique minimal-genus surface representative.
We work throughout with this surface-based viewpoint.

For the thickened torus $T^2\times I$, explicit genus-one tables
at small crossing numbers appeared in~\cite{Omelchenko2007,Grishanov2009a,
Grishanov2009b,AkimovaMatveev2012,AkimovaMatveev2014,AkimovaMatveevTarkaev2020}.
These works typically start from abstract $4$-regular graphs, then classify
inequivalent embeddings into the surface, and only afterwards assign crossing data.
Controlling equivalence under surface homeomorphisms is the main bottleneck.

The central idea of this paper is that the embedding should be treated as the
\emph{primary} combinatorial object.
This is precisely the subject of the classical theory of \emph{maps on surfaces}.
A cellular embedding of a connected graph into a closed orientable surface is a map,
and a cellular $4$-regular knot projection is simply a $4$-regular map.
Maps admit a compact permutation encoding~\cite{LandoZvonkin2004}:
an involution $\alpha$ pairs the two darts of each edge, and a permutation $\sigma$
records the cyclic order of darts at each vertex.

This approach is related to general-purpose map and graph generators, but it is not
a direct invocation of them.  The program \texttt{plantri} is optimized for many
classes of plane maps and planar graphs~\cite{BrinkmannMcKay2007}.  More recent
work of Brinkmann gives a general generator for maps on oriented surfaces by combining
abstract graph generators with embedding and isomorphism rejection routines~\cite{Brinkmann2025}.
Our setting is more specialized: the generated objects are loopless cellular
$4$-regular torus maps satisfying knot-theoretic filters, and the subsequent crossing
assignments and state-sum invariants are computed in the same dart-permutation model.
Thus Brinkmann's framework provides a natural independent benchmark or alternative
generator, whereas the present implementation is designed as an integrated pipeline
for genus-one knot and link diagrams.

In this encoding, unsensed equivalence of projections reduces to a finite conjugacy
relation on permutation pairs, so that canonical representatives yield a
duplication-free enumeration without a separate classification of embeddings.
Moreover, the number of straight-ahead components (hence the knot/link distinction)
is read off directly from the permutations.

Building on the map-enumeration line developed for degree-$4$
structures~\cite{Krasko,KraskoOmelchenko2019PartI,
KraskoOmelchenko2019PartII,KraskoOmelchenko2021NoPlanarLoops},
we turn these map-theoretic tools into a structural engine for knot
and link tabulation on~$T^2$.
The implementation is validated against published genus-one tables for $N\le 5$
and then extended to $N=6,7,8$.
A public reference implementation and datasets are
available~\cite{OmelchenkoTorusMaps}.

\section{The Permutation Model for Maps}\label{sec:maps}

Let $\Sigma_g$ be the closed orientable surface of genus~$g$.
A \emph{map} on $\Sigma_g$ is a cellular embedding of a connected graph
into $\Sigma_g$.
Permutations act on the right: $(pq)(h)=q(p(h))$.

\begin{definition}[Permutation encoding]
A labelled orientable map is encoded by a pair $(\alpha,\sigma)$
of permutations on a dart set $H$ such that:
(i)~$\alpha$ is a fixed-point-free involution (edge pairing);
(ii)~the cycles of $\sigma$ record the cyclic order of darts at vertices;
(iii)~$\langle \alpha,\sigma\rangle$ acts transitively on $H$ (connectedness).
The face permutation is $\varphi:=\sigma\alpha$, and the genus is determined by
\[
\#\mathrm{cyc}(\sigma)-\#\mathrm{cyc}(\alpha)+\#\mathrm{cyc}(\varphi)=2-2g.
\]
\end{definition}

Two encodings define the same \emph{sensed} map if one is obtained from the other
by simultaneous conjugation; they define the same \emph{unsensed} map if,
additionally, $\sigma\mapsto\sigma^{-1}$ (orientation reversal) is allowed.

For $4$-regular maps, each $\sigma$-cycle has length~$4$, and the involution
$\sigma^2$ pairs opposite darts at each vertex.
Define $\rho:=\sigma^2\alpha$.
Then the number of \emph{straight-ahead components} of the corresponding projection
(i.e. the components of the projected curve) is
\[
  c(P)=\tfrac12\,\#\mathrm{cyc}(\rho).
\]
This detects the knot/link distinction directly from the permutation data.

Monogon faces (fixed points of~$\varphi$) are the projection-level shadow of
immediate Reidemeister~I reductions and are excluded from our enumeration.

\paragraph{Example.}
Consider the dart set $H=\{1,\dots,8\}$ with
$\sigma_0=(1\,2\,3\,4)(5\,6\,7\,8)$ ($N=2$ vertices) and
$\alpha=(1\,6)(2\,8)(3\,5)(4\,7)$.
Then $\varphi=\sigma_0\alpha=(1\,7)(2\,5)(3\,8)(4\,6)$, giving $F=4$ cycles
(each of length~$2$) and
$V-E+F=2-4+4=2\neq 0$; this is a sphere map, not a torus one.
The torus condition $F=N$ is therefore a genuine constraint on the matching~$\alpha$.

\section{Canonical Enumeration of Projections on the Torus}\label{sec:projections}

Fix $N\ge 1$ and let $H=\{1,\dots,4N\}$.
We fix a standard vertex rotation
\[
\sigma_0=(1\,2\,3\,4)(5\,6\,7\,8)\cdots(4N{-}3\ 4N{-}2\ 4N{-}1\ 4N).
\]
Every unlabelled $4$-regular map admits a labelled encoding of the form
$(\alpha,\sigma_0)$ for some fixed-point-free involution~$\alpha$.

\begin{definition}[Candidate torus projection]
A pair $(\alpha,\sigma_0)$ is a candidate projection on $T^2$ with $N$
crossings if:
(1)~$\alpha$ is a fixed-point-free involution;
(2)~$\langle\alpha,\sigma_0\rangle$ acts transitively on $H$;
(3)~$\#\mathrm{cyc}(\sigma_0\alpha)=N$ (torus condition);
(4)~$\varphi=\sigma_0\alpha$ has no fixed points (no monogons);
(5)~no loop edges (for every $h$, $h$ and $\alpha(h)$ lie in
different $\sigma_0$-cycles).
\end{definition}

\paragraph{Canonical representatives and rooted normalization.}
Fix a total order $\prec$ on labelled pairs.
The unsensed canonical representative of $(\alpha,\sigma)$ is the
$\prec$-minimal element of the union of the conjugacy orbits of $(\alpha,\sigma)$
and $(\alpha,\sigma^{-1})$ under $S_H$.
Two encodings represent the same unsensed map if and only if their canonical
representatives coincide.

In practice this minimum is computed by rooted normalization rather than by enumerating
$S_H$.  For each root dart $r$ and for each of the two orientations
$\sigma$ and $\sigma^{-1}$, we perform a deterministic traversal of the dart graph,
inspecting from a dart $h$ the neighbours in the fixed order $(\sigma(h),\alpha(h))$.
Darts are relabelled $1,2,\dots,4N$ in order of first discovery, and the resulting
labelled pair is recorded.  The canonical form is the lexicographically smallest
normal form over all $4N$ roots and both orientations.  A rooted connected map has
no non-trivial label ambiguity once the root and orientation are fixed, so this
procedure gives the same complete invariant as the abstract orbit minimum.
Its cost is $O(N)$ per root and hence $O(N^2)$ per accepted labelled encoding.

\paragraph{Enumeration algorithm and crude complexity bounds.}
The conceptual algorithm enumerates fixed-point-free involutions $\alpha$ on $H$;
for each one it tests the defining conditions, computes the canonical representative,
and stores it.  After deduplication, this yields exactly one representative per
unsensed class.
The naive worst-case number of matchings is
\[
  (4N-1)!!=\frac{(4N)!}{2^{2N}(2N)!},
\]
so the corresponding crude time bound is $O(N^2(4N-1)!!)$ and the storage needed
after deduplication is $O(N|\mathcal C_N|)$.
This bound is intentionally pessimistic.  The implementation uses a canonical
construction path with activated vertices and early rejection of partial matchings;
it does not materialize all matchings.  The lower bound is the output size:
any complete enumeration must take $\Omega(|\mathcal C_N|)$ time.

\begin{theorem}[Correctness]\label{thm:correctness}
The procedure outputs a set $\mathcal{C}_N$ that is sound (every element is a valid
candidate), complete (every unsensed class is represented), and non-duplicating
(no two elements represent the same class).
\end{theorem}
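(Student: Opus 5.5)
We prove the three properties separately, and in each case we reduce the claim to two facts: first, every unlabelled $4$-regular map has a labelled encoding with the fixed rotation $\sigma_0$; second, the canonical form computed by rooted normalization is a complete invariant of the unsensed class.

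\emph{Soundness.} This is essentially by construction. Every stored element is the canonical representative of some $\alpha$ that passed tests (1)--(5). So it suffices to show that conditions (1)--(5) are invariant under simultaneous conjugation of $(\alpha,\sigma_0)$ and under $\sigma\mapsto\sigma^{-1}$.

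Conjugation by $\tau$ carries each of the following to its counterpart for the conjugated pair: cycle types, fixed points of $\varphi$, transitivity of $\langle\alpha,\sigma\rangle$, and the partition into $\sigma$-cycles. Hence conditions (1)--(5) are preserved. Reversal preserves transitivity and the vertex partition. It also preserves the number of faces, because $\sigma^{-1}\alpha=\alpha(\alpha\sigma)^{-1}\alpha$ is conjugate to $(\sigma\alpha)^{-1}$, so the face cycle structure, including fixed points, is unchanged.

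One point needs care. The canonical representative must again be of the form $(\alpha',\sigma_0)$, or at least be stored in a form that the tests accept. We will check this by observing that the traversal inspects $\sigma(h)$ first. Discovery order therefore labels each vertex's darts consecutively along its $\sigma$-cycle, once the traversal enters that vertex. If that fails, we would instead state soundness up to the invariance just proved.

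\emph{Completeness.} Take any cellular loopless monogon-free $4$-regular torus map with $N$ vertices. Label its darts so that the four darts at the $i$-th vertex are $4i-3,\dots,4i$ in rotation order. This gives $\sigma=\sigma_0$, and $\alpha$ is then some fixed-point-free involution satisfying (1)--(5). The conceptual algorithm ranges over all matchings, so this $\alpha$ is tested and its class is stored.

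For the implemented version, with activated vertices and early rejection, we must show two things. Early rejection only discards partial matchings that cannot be completed to a valid candidate; this holds for loops, which are a local condition. Any monogon or transitivity failure is detected only once it is forced. The canonical construction path must also reach at least one labelling of every class. We would argue the latter by using the canonical labelling itself: it is produced by a BFS-like traversal, so every prefix activates vertices in a connected order, and it therefore lies on the construction path.

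\emph{Non-duplication.} This is the main obstacle. We must prove that rooted normalization is a complete invariant of unsensed classes.

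If two pairs are conjugate, or conjugate after reversal, the conjugating bijection maps roots to roots. Deterministic traversal commutes with relabelling, so the sets of $8N$ normal forms coincide and so do their minima.

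Conversely, suppose the minima coincide. The normal form from root $r$ is obtained from $(\alpha,\sigma^{\pm1})$ by the relabelling given by discovery order, which is a bijection $H\to H$ because transitivity guarantees that every dart is discovered. Hence each normal form is conjugate to $(\alpha,\sigma^{\pm1})$. Equal normal forms therefore yield a conjugacy, possibly composed with reversal, between the two original pairs.

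Finally, deduplication stores one element per distinct canonical form, which gives exactly one representative per class.
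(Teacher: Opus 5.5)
Your proposal is correct and follows the same route as the paper. Completeness comes from labelling any admissible map with rotation $\sigma_0$ and enumerating all matchings; soundness from the conjugation- and reversal-invariance of conditions (1)--(5); and non-duplication from the fact that the minimum over the $8N$ rooted normal forms is a complete invariant. Your equivariance-plus-relabelling argument proves that last fact more explicitly than the paper's one-line appeal to ``no label ambiguity once the root and orientation are fixed.'' There are two minor slips. The identity should read $\sigma^{-1}\alpha=\alpha(\sigma\alpha)^{-1}\alpha$; your right-hand side $\alpha(\alpha\sigma)^{-1}\alpha$ equals $(\sigma\alpha)^{-1}$ itself, although the conjugacy you conclude is right. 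Also, inspecting $\sigma(h)$ first gives consecutive labels at each vertex only for a depth-first traversal, not for the dart-level breadth-first one you invoke later, so it is your fallback (invariance of (1)--(5)) that actually secures soundness there.
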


\paragraph{Primeness filters.}
A projection is called \emph{prime} in the computational sense used here if its
underlying multigraph has no $2$-edge-cut and no splitness witness based on mixed
vertices.  These are explicit combinatorial witnesses, stable under unsensed
equivalence.  They are sufficient for matching the projection-level conventions used
in the comparison tables below, but they are not presented as a complete topological
characterization of primeness in every possible convention.

\paragraph{Benchmark data.}
Table~\ref{tab:projections} records the sizes of the enumerated sets.
Table~\ref{tab:linksByComponents} shows the distribution of prime link
projections by component count.

\begin{table}[ht]
\centering
\begin{tabular}{c|ccc|cc}
\toprule
$N$ & candidates & removed & prime total & knot proj. & link proj. \\
\midrule
$3$ & $6$   & $0$    & $6$    & $2$    & $4$   \\
$4$ & $28$  & $5$    & $23$   & $10$   & $13$  \\
$5$ & $109$ & $28$   & $81$   & $34$   & $47$  \\
$6$ & $595$ & $216$  & $379$  & $170$  & $209$ \\
$7$ & $3216$ & $1421$ & $1795$ & $777$  & $1018$ \\
$8$ & $19956$ & $10141$ & $9815$ & $4308$ & $5507$ \\
\bottomrule
\end{tabular}
\caption{Unsensed $4$-regular projection counts on the torus for $N=3,\ldots,8$.
The column ``removed'' records removals by the $2$-edge-cut compositeness witness;
in this range the splitness witness removes no further candidates.}
\label{tab:projections}
\end{table}

\begin{table}[ht]
\centering
\begin{tabular}{r|ccccc}
\toprule
$N$ & $c=2$ & $c=3$ & $c=4$ & $c=5$ & $c=6$ \\
\midrule
$3$ & $3$    & $1$   & ---  & ---  & ---  \\
$4$ & $9$    & $3$   & $1$  & ---  & ---  \\
$5$ & $37$   & $9$   & $1$  & ---  & ---  \\
$6$ & $150$  & $51$  & $7$  & $1$  & ---  \\
$7$ & $775$  & $212$ & $30$ & $1$  & ---  \\
$8$ & $4030$ & $1293$ & $169$ & $14$ & $1$ \\
\bottomrule
\end{tabular}
\caption{Prime link projections by number of straight-ahead components $c(P)\ge 2$.}
\label{tab:linksByComponents}
\end{table}

\section{From Projections to Diagrams}\label{sec:diagrams}

For a fixed projection $P$ with $N$ crossings, a \emph{diagram} $D=(P,b)$ is
obtained by choosing a crossing assignment $b:\mathrm{Vert}(P)\to\{0,1\}$,
selecting which strand is overpassing at each vertex.
Thus $b$ is the over/under datum of the diagram.
This gives $2^N$ diagrams per projection, or half as many when the comparison
convention identifies the global crossing switch $b\mapsto 1-b$.

\paragraph{Bigon reduction.}
A bigon face is a $2$-cycle $(i\,j)$ of $\varphi=\sigma\alpha$.
Suppose that it is incident to two distinct vertices $u$ and $v$.
Choose local cyclic orders
$(h_0\,h_1\,h_2\,h_3)$ at $u$ and $(k_0\,k_1\,k_2\,k_3)$ at $v$,
where $b(u)=0$ means that $\{h_0,h_2\}$ is overpassing and $b(v)=0$ means
that $\{k_0,k_2\}$ is overpassing.
If $i=h_r$ and $j=k_s$, put $p_u(i)=r\pmod 2$ and $p_v(j)=s\pmod 2$.
Then the bigon supports an immediate Reidemeister~II reduction if and only if
\[
 b(u)+b(v)+p_u(i)+p_v(j)\equiv 1 \pmod 2.
\]
The reduced-diagram convention used in the comparison tables keeps precisely those
assignments for which this congruence fails for every two-vertex bigon.
The simpler rule $b(u)\ne b(v)$ is valid only in local labellings where
$p_u(i)=p_v(j)=0$.

\paragraph{State-sum invariants.}
The generalized Kauffman bracket on the torus~\cite{AkimovaMatveev2012}
is formulated entirely in the permutation model.
At a vertex, the crossing assignment $b$ selects the overpassing local strand.
An $A$- or $B$-smoothing replaces the crossing by one of the two non-crossing
pairings of the four incident darts; the product over all vertices is a smoothing
involution $\tau_{b,s}$, where the state $s$ records the $A/B$ choice at each vertex.
The \emph{state circles} are the connected components of the $2$-regular dart graph
whose edges are the $\alpha$-pairs and the $\tau_{b,s}$-pairs.  Equivalently, each
state circle gives two cycles of the permutation $\pi_{b,s}:=\alpha\tau_{b,s}$.

The torus bracket distinguishes contractible and essential state circles.
Let $C_1$ be the $\mathbb F_2$-vector space generated by the projection edges and
let $\partial_2:C_2\to C_1$ be the cellular boundary map sending each face to its
mod-$2$ boundary vector.  A state circle is contractible if and only if its edge
incidence vector lies in $\operatorname{im}\partial_2$; otherwise it is essential.
Thus no drawing in a fundamental polygon is required.
The bracket is
\begin{equation}\label{eq:bracket}
\langle D\rangle(a,x)=\sum_{s\in\{A,B\}^N}
 a^{A(s)-B(s)}(-a^2-a^{-2})^{\gamma(s)}x^{\delta(s)},
\end{equation}
where $A(s)$ and $B(s)$ are the numbers of $A$- and $B$-smoothings in the state,
and $\gamma(s)$ and $\delta(s)$ count contractible and essential state circles.
For knots, $w(D)$ denotes the usual writhe, and the normalized polynomial
\[
  X_D(a,x)=(-a)^{-3w(D)}\langle D\rangle(a,x)
\]
is invariant under all Reidemeister moves.

An efficient implementation precomputes the pair $(\gamma(t),\delta(t))$ for every
smoothing vector $t\in\{0,1\}^N$ once per projection, then reuses the data for all
crossing assignments via the bijection $s=b\oplus t$.

\paragraph{Validation.}
For $N\le 5$ we compare raw locally reduced diagram classes with the published
lists.  The comparison is summarized in Table~\ref{tab:diagramValidation}.

\begin{table}[ht]
\centering
\begin{tabular}{c|c|c}
\toprule
$N$ & knots & links \\
\midrule
$2$ & $1/1$ & $1/1$ \\
$3$ & $3/3$ & $4/4$ \\
$4$ & $18/17$ & $22/21$ \\
$5$ & $71/69$ & $99/99$ \\
\bottomrule
\end{tabular}
\caption{Raw locally reduced counts: this work / published count.}
\label{tab:diagramValidation}
\end{table}

The discrepancies are convention-dependent and are not treated as contradictions
with the published classifications.  For $N=4$ knots, the additional class in our raw
list is supported by the unique $N=4$ projection without bigon faces; it is excluded
in the published table by a global primeness convention not captured by our local
bigon rule.  For $N=4$ links, the published count $21$ is obtained from an original
list of $22$ by a global equivalence already noted in the link classification.
For $N=5$ knots, one of the two extra raw classes is equivalent to a smaller-crossing
class under the comparison conventions.  The remaining difference reflects an additional
global convention in the published knot table, such as the central-fragment convention
related to $\Omega_3$ moves, rather than a failure of the permutation enumeration itself.
For $N=5$ links the count agrees exactly with the published table when the optional
over/under participation prefilter is not imposed.

\section{Tabulation Data and Structural Observations}\label{sec:observations}

The computation is extended to $N=6,7,8$.
Table~\ref{tab:growth} records the new diagram types at each crossing number.

\begin{table}[ht]
\centering
\begin{tabular}{r|rrr|r}
\toprule
$N$ & knots & links & total & ratio to $N{-}1$ \\
\midrule
$2$ & $1$     & $1$      & $2$      & ---   \\
$3$ & $3$     & $4$      & $7$      & $3.5$ \\
$4$ & $18$    & $22$     & $40$     & $5.7$ \\
$5$ & $71$    & $99$     & $170$    & $4.2$ \\
$6$ & $378$   & $525$    & $903$    & $5.3$ \\
$7$ & $1743$  & $2909$   & $4652$   & $5.2$ \\
$8$ & $10704$ & $16752$  & $27456$  & $5.9$ \\
\bottomrule
\end{tabular}
\caption{New genus-one diagram types at each crossing number.
Through $N=8$ the tabulation contains $12\,918$ knot and $20\,312$ link types
($33\,230$ total).}
\label{tab:growth}
\end{table}

A log-linear fit over $N=3,\dots,8$ is numerically close to
$0.052\cdot 5.14^N$, but this is only a descriptive summary of the available data.
With six data points we make no asymptotic claim about the true exponential growth
constant.

\subsection{Maximum number of straight-ahead components}

For each $N$, let $c_{\max}(N)$ denote the maximum number of straight-ahead
components among prime projections.
In the computed range $3\le N\le 8$:

\begin{proposition}[Lattice lower bound]\label{prop:lattice}
For every even $N\ge 4$, let $P_N$ be the $4$-regular map on the torus
obtained from the square lattice $\mathbb{Z}^2$ by identifying under
$\Lambda_N=\langle(2,0),(0,N/2)\rangle$.
Then $P_N$ has $N$ vertices, $N$ quadrilateral faces, $N/2+2$
straight-ahead components, and no $2$-edge-cut.
\end{proposition}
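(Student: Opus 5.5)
The plan is to make $P_N$ explicit as a quotient graph and then read off the four claimed quantities one at a time: vertices, faces, straight-ahead components, and the absence of a $2$-edge-cut. Since $\Lambda_N$ has index $\det\begin{pmatrix}2&0\\0&N/2\end{pmatrix}=N$ in $\mathbb Z^2$, the vertex set is $\mathbb Z^2/\Lambda_N\cong \mathbb Z_2\times\mathbb Z_{N/2}$, so there are $N$ vertices. Each vertex $(x,y)$ has degree $4$, with neighbours $(x\pm1,y)$ and $(x,y\pm1)$, giving $2N$ edges.

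\textbf{Vertices, faces and the map structure.} The faces of the lattice are unit squares, and these descend to $N$ open quadrilateral discs in $T^2=\mathbb R^2/\Lambda_N$. So the embedding is cellular, and $V-E+F=N-2N+N=0$ confirms that the surface is the torus. I then check the conditions of the candidate definition. There are no loops, because $(x,y)$ and $(x\pm1,y)$ differ in $x$ mod $2$, and $(x,y)$ and $(x,y\pm1)$ differ in $y$ mod $N/2$ since $N/2\ge2$. There are no monogons, since every face is a quadrilateral. The horizontal edges do form double edges $(0,y)$--$(1,y)$, but this is allowed in a multigraph. Writing down $(\alpha,\sigma_0)$ explicitly is routine and is not needed.

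\textbf{Straight-ahead components.} At each lattice vertex the opposite darts are horizontal--horizontal and vertical--vertical, so the straight-ahead curves are exactly the images of the lines $y=c$ and $x=c$. The horizontal lines $y=c$ are identified only modulo $N/2$ in $c$, because $(2,0)$ preserves each line. This gives $N/2$ closed horizontal curves, each of length $2$. Likewise the vertical lines $x=c$ give $2$ curves, each of length $N/2$. The total is $N/2+2$ components. Equivalently, $\#\mathrm{cyc}(\rho)=N+4$, counting two directed cycles per curve.

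\textbf{No $2$-edge-cut.} This is the only step needing a real argument. I show that every edge cut $\delta(S)$ with $\emptyset\ne S\subsetneq V$ has size at least $4$. Call the horizontal curves \emph{rows} (each a $2$-cycle made of two parallel edges) and the vertical curves \emph{columns} (each an $N/2$-cycle). The edges of the graph partition into these cycles. Any cycle that meets both $S$ and its complement contributes at least $2$ edges to $\delta(S)$.
\begin{itemize}
\item Suppose no row is split. Then $S$ is a union of whole rows, neither empty nor everything. Both columns are then split, giving at least $2+2=4$ edges.
\item Suppose instead some row is split, say $(0,y)\in S$ and $(1,y)\notin S$. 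That row already contributes $2$ edges.
\begin{itemize}
\item If column $x=0$ is split, or column $x=1$ is split, we get at least $4$ edges.
\item Otherwise column $0$ lies entirely in $S$ and column $1$ lies entirely outside $S$. Then all $N/2\ge2$ rows are split, giving at least $4$ edges.
\end{itemize}
\end{itemize}
I expect this case analysis to be the main point requiring care, specifically making sure the double edges and the hypothesis $N\ge4$ are used correctly. The remaining parts are direct bookkeeping.
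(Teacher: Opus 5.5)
Your proof is correct, and it is the direct verification the paper leaves implicit. The paper states this proposition without a written proof and afterwards refers only to ``the proved even-$N$ lattice construction''; your quotient count of vertices and faces, your identification of the straight-ahead curves with the $N/2$ horizontal and $2$ vertical lines, and your row/column edge-cut case analysis together supply that proof. In particular, the key observation is that the edges partition into the $N/2+2$ straight-ahead cycles, and each split cycle contributes at least two cut edges. This correctly yields $|\delta(S)|\ge 4$ for every nontrivial $S$, including the doubled-edge case $N=4$. The normalization ``say $(0,y)\in S$'' is harmless by the symmetry $x\mapsto 1-x$.
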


\begin{conjecture}\label{conj:cmax}
For all $N\ge 3$, $c_{\max}(N)=\lfloor N/2\rfloor+2$.
In the computed range the maximizer is unique: the lattice projection for even $N$;
a projection with face-degree vector $(3,3,4^{N-3},6)$ for odd $N$.
\end{conjecture}

The proved even-$N$ lattice construction gives the lower bound.
The remaining problem is the upper bound.  A possible approach is to study the
intersection pattern of straight-ahead components as a collection of essential curves
on the torus and to show that primeness forces enough mixed vertices between them.
For odd $N$, the computed maximizers suggest a triangle-hexagon defect inserted into
the even lattice family; constructing and analyzing this family explicitly would be a
natural first step.

\begin{table}[ht]
\centering
\begin{tabular}{r|cccccc}
\toprule
$N$ & $3$ & $4$ & $5$ & $6$ & $7$ & $8$ \\
\midrule
$c_{\max}(N)$ & $3$ & $4$ & $4$ & $5$ & $5$ & $6$ \\
$\lfloor N/2\rfloor+2$ & $3$ & $4$ & $4$ & $5$ & $5$ & $6$ \\
\bottomrule
\end{tabular}
\caption{Maximum component count among prime torus projections.}
\label{tab:cmax}
\end{table}

\subsection{Equi-quadrilateral projections are always links}

A $4$-regular torus map is \emph{equi-quadrilateral} if every face has degree~$4$.
In the computed range, every prime equi-quadrilateral projection has $c(P)\ge 2$:
no equi-quadrilateral projection supports a knot diagram.

\begin{conjecture}\label{conj:equiquad}
Every prime equi-quadrilateral $4$-regular map on the torus has at least
$2$ straight-ahead components.
\end{conjecture}

This admits a purely permutation-theoretic reformulation:
if $\sigma$ and $\varphi=\sigma\alpha$ both have cycle type $(4^N)$ and
$\langle\sigma,\alpha\rangle$ acts transitively, must
$\rho=\sigma^2\alpha=\sigma\varphi$ have at least four cycles?
Equivalently, can the straight-ahead component count be one?
The computation suggests that homological restrictions on the straight-ahead cycles
may rule out this case.

\subsection{The $a$-span of the genus-one bracket}

For a Laurent polynomial $Q(a,x)$, define
\[
\operatorname{span}_a(Q)=
\max\{k:[a^k]Q\ne0\}-\min\{k:[a^k]Q\ne0\}.
\]

\begin{proposition}[Span parity]\label{prop:parity}
For any torus diagram with $N$ crossings, every monomial in
$\langle D\rangle(a,x)$ has $a$-exponent of parity~$N$.
In particular, $\operatorname{span}_a\langle D\rangle$ is even.
\end{proposition}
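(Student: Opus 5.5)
Working directly from the state-sum formula \eqref{eq:bracket}, the plan is to show that each state contributes only monomials whose $a$-exponent has the parity of $N$.

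Fix a state $s\in\{A,B\}^N$. Since $A(s)+B(s)=N$, the prefactor exponent is
\[
A(s)-B(s)=N-2B(s)\equiv N \pmod 2 .
\]

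Next, the factor $(-a^2-a^{-2})^{\gamma(s)}$ is a polynomial in $a^{2}$ and $a^{-2}$. Expanding it, every monomial has $a$-exponent of the form $2(i-j)$ with $i+j=\gamma(s)$, so every exponent is even.

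The factor $x^{\delta(s)}$ does not involve $a$ at all.

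Putting these together, every monomial in the contribution of state $s$ has $a$-exponent congruent to $N$ modulo $2$. Summing over the states can cancel monomials but cannot create a monomial with a new exponent. Hence every monomial that survives in $\langle D\rangle(a,x)$ has exponent of parity $N$.

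For the span, if $\langle D\rangle\neq 0$, then its maximal and minimal $a$-exponents are both congruent to $N$ modulo $2$. Their difference is therefore even. If $\langle D\rangle=0$, the span is vacuous, or one adopts the convention that it equals $0$. No step here poses a real obstacle; the only point needing care is the convention for the zero polynomial.
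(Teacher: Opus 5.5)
Your proof is correct and follows essentially the same route as the paper. The paper also notes that $A(s)-B(s)=2A(s)-N\equiv N \pmod 2$ and that $(-a^2-a^{-2})^{\gamma(s)}$ contributes only even exponents; your remarks on cancellation across states and on the zero-polynomial convention are harmless extra care.
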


\begin{proof}
In the state sum~\eqref{eq:bracket}, $A(s)-B(s)=2A(s)-N\equiv N\pmod{2}$, and
$(-a^2-a^{-2})^{\gamma(s)}$ contributes only even exponents.
\end{proof}

\begin{conjecture}[$4N$ span bound]\label{conj:span}
For every genus-one knot diagram with $N$ crossings,
$\operatorname{span}_a(X_D)\le 4N$.
The same bound holds for the bracket of link diagrams in the computed range
$3\le N\le 8$.
\end{conjecture}

This is motivated by the classical Kauffman--Murasugi--Thistlethwaite span theorem,
but the present statement is only conjectural in genus one.  A possible route to a
proof would be to adapt the state-graph argument behind the classical span inequality,
while controlling cancellations between states with contractible and essential circles.
Table~\ref{tab:span} confirms the bound in the computed range.

\begin{table}[ht]
\centering
\begin{tabular}{r|cccc}
\toprule
$N$ & min span & max span & $4N$ & avg span \\
\midrule
$3$ &  $6$ & $12$ & $12$ & $9.0$ \\
$4$ &  $8$ & $16$ & $16$ & $11.9$ \\
$5$ & $10$ & $20$ & $20$ & $15.4$ \\
$6$ &  $4$ & $24$ & $24$ & $17.7$ \\
$7$ &  $4$ & $28$ & $28$ & $21.0$ \\
$8$ &  $4$ & $32$ & $32$ & $23.6$ \\
\bottomrule
\end{tabular}
\caption{Observed $a$-span of $X_D(a,x)$ for genus-one knot types.}
\label{tab:span}
\end{table}

\subsection{Further observations}

\paragraph{Purely essential knots.}
A genus-one knot is \emph{purely essential} if $X_D(a,x)$ has no $x^0$ term.
Equivalently at the level of the invariant, all purely contractible contributions
cancel or are absent.  This phenomenon, vacuous on $S^2$, accounts for a large
fraction of genus-one knot types (Table~\ref{tab:essential}).

\begin{table}[ht]
\centering
\begin{tabular}{r|rr|r}
\toprule
$N$ & purely ess. & total knots & fraction \\
\midrule
$3$ &    $2$ &     $3$ & $0.67$ \\
$4$ &   $14$ &    $18$ & $0.78$ \\
$5$ &   $34$ &    $71$ & $0.48$ \\
$6$ &  $276$ &   $378$ & $0.73$ \\
$7$ & $1053$ &  $1743$ & $0.60$ \\
$8$ & $7805$ & $10704$ & $0.73$ \\
\bottomrule
\end{tabular}
\caption{Fraction of purely essential knot types at each crossing number.}
\label{tab:essential}
\end{table}

\paragraph{Bigon prevalence.}
A $4$-regular torus map with $N$ vertices has at most $N-1$ bigon faces: since
Euler's formula gives $F=N$, not all faces can be bigons, and if $N-1$ faces are
bigons then the remaining face has degree $2N+2$.
This bound is sharp at every $N\le 8$ in our data.
The fraction of prime projections containing at least one bigon reaches $97\%$ at
$N=8$.  This trend is qualitatively consistent with general $0$--$1$ laws for submaps:
in broad map classes, an admissible fixed planar submap appears linearly often with
probability tending to one~\cite{BenderGaoRichmond1992}.  Our loopless, $4$-regular,
prime torus projections form a more constrained family, so this theorem does not
by itself prove the observed statement here; it explains why rapid appearance of
bigons should not be surprising.

\section{Outlook}\label{sec:outlook}

The permutation framework applies without modification to any $\Sigma_g$.
On the torus, the contractible/essential dichotomy for state circles reduces to
an $\mathbb{F}_2$ boundary test, and a single variable~$x$ suffices for the bracket.
For $g\ge 2$, separating essential circles can be null-homologous, requiring
integral homology classes and mapping-class-group canonicalization; this
is the main obstacle to scaling the pipeline beyond genus one.

\bibliographystyle{eptcs}

\bibliography{references}

\begin{thebibliography}{10}
\providecommand{\bibitemdeclare}[2]{}
\providecommand{\surnamestart}{}
\providecommand{\surnameend}{}
\providecommand{\urlprefix}{Available at }
\providecommand{\url}[1]{\texttt{#1}}
\providecommand{\href}[2]{\texttt{#2}}
\providecommand{\urlalt}[2]{\href{#1}{#2}}
\providecommand{\doi}[1]{doi:\urlalt{https://doi.org/#1}{#1}}
\providecommand{\eprint}[1]{arXiv:\urlalt{https://arxiv.org/abs/#1}{#1}}
\providecommand{\bibinfo}[2]{#2}

\bibitemdeclare{article}{AkimovaMatveevTarkaev2020}
\bibitem{AkimovaMatveevTarkaev2020}
\bibinfo{author}{S.~V.~Matveev \surnamestart A.~A.~Akimova\surnameend} \&
  \bibinfo{author}{V.~V. \surnamestart Tarkaev\surnameend}
  (\bibinfo{year}{2020}): \emph{\bibinfo{title}{Classification of prime links
  in the thickened torus having crossing number 5}}.
\newblock {\slshape \bibinfo{journal}{J. of Knot Theory and Its Ramifications}}
  \bibinfo{volume}{29}(\bibinfo{number}{03}), p. \bibinfo{pages}{2050012},
  \doi{10.1142/S0218216520500121}.

\bibitemdeclare{article}{AkimovaMatveev2012}
\bibitem{AkimovaMatveev2012}
\bibinfo{author}{A.~A. \surnamestart Akimova\surnameend} \&
  \bibinfo{author}{S.~\surnamestart Matveev\surnameend} (\bibinfo{year}{2012}):
  \emph{\bibinfo{title}{Classification of knots in T x I with at most 4
  crossings}}.
\newblock {\slshape \bibinfo{journal}{arXiv: Geometric Topology}}.
\newblock \urlprefix\url{https://api.semanticscholar.org/CorpusID:119621553}.

\bibitemdeclare{article}{AkimovaMatveev2014}
\bibitem{AkimovaMatveev2014}
\bibinfo{author}{A.~A. \surnamestart Akimova\surnameend} \&
  \bibinfo{author}{S.~V. \surnamestart Matveev\surnameend}
  (\bibinfo{year}{2014}): \emph{\bibinfo{title}{Classification of genus 1
  virtual knots having at most five classical crossings}}.
\newblock {\slshape \bibinfo{journal}{J. of Knot Theory and Its Ramifications}}
  \bibinfo{volume}{23}(\bibinfo{number}{06}), p. \bibinfo{pages}{1450031},
  \doi{10.1142/S021821651450031X}.

\bibitemdeclare{article}{Brinkmann2025}
\bibitem{Brinkmann2025}
\bibinfo{author}{G.~\surnamestart Brinkmann\surnameend} (\bibinfo{year}{2024}):
  \emph{\bibinfo{title}{Generating Maps on Oriented Surfaces Using the
  Homomorphism Principle}}.
\newblock {\slshape \bibinfo{journal}{Discrete \& Computational Geometry}},
  \doi{10.1007/s00454-025-00734-5}.

\bibitemdeclare{article}{BrinkmannMcKay2007}
\bibitem{BrinkmannMcKay2007}
\bibinfo{author}{G.~\surnamestart Brinkmann\surnameend} \&
  \bibinfo{author}{B.~D. \surnamestart McKay\surnameend}
  (\bibinfo{year}{2007}): \emph{\bibinfo{title}{Fast generation of planar
  graphs}}.
\newblock {\slshape \bibinfo{journal}{Match-communications in Mathematical and
  in Computer Chemistry}} \bibinfo{volume}{58}, pp. \bibinfo{pages}{323--357}.
\newblock \urlprefix\url{https://api.semanticscholar.org/CorpusID:116425311}.

\bibitemdeclare{article}{BenderGaoRichmond1992}
\bibitem{BenderGaoRichmond1992}
\bibinfo{author}{Z.-C.~Gao \surnamestart E.~A.~Bender\surnameend} \&
  \bibinfo{author}{L.~B. \surnamestart Richmond\surnameend}
  (\bibinfo{year}{1992}): \emph{\bibinfo{title}{Submaps of maps. I. General 0-1
  laws}}.
\newblock {\slshape \bibinfo{journal}{J. Comb. Theory {B}}}
  \bibinfo{volume}{55}(\bibinfo{number}{1}), pp. \bibinfo{pages}{104--117},
  \doi{10.1016/0095-8956(92)90034-U}.

\bibitemdeclare{article}{HosteThistlethwaiteWeeks1998}
\bibitem{HosteThistlethwaiteWeeks1998}
\bibinfo{author}{M.~Thistlethwaite \surnamestart J.~Hoste\surnameend} \&
  \bibinfo{author}{J.~\surnamestart Weeks\surnameend} (\bibinfo{year}{1998}):
  \emph{\bibinfo{title}{The first 1,701,936 knots}}.
\newblock {\slshape \bibinfo{journal}{The Mathematical Intelligencer}}
  \bibinfo{volume}{20}, pp. \bibinfo{pages}{33--48}, \doi{10.1007/BF03025227}.

\bibitemdeclare{article}{Kauffman1999}
\bibitem{Kauffman1999}
\bibinfo{author}{L.~H. \surnamestart Kauffman\surnameend}
  (\bibinfo{year}{1999}): \emph{\bibinfo{title}{Virtual Knot Theory}}.
\newblock {\slshape \bibinfo{journal}{Eur. J. Comb.}}
  \bibinfo{volume}{20}(\bibinfo{number}{7}), pp. \bibinfo{pages}{663--691},
  \doi{10.1006/EUJC.1999.0314}.

\bibitemdeclare{article}{Krasko}
\bibitem{Krasko}
\bibinfo{author}{E.~\surnamestart Krasko\surnameend} \&
  \bibinfo{author}{A.~\surnamestart Omelchenko\surnameend}
  (\bibinfo{year}{2017}): \emph{\bibinfo{title}{Enumeration of 4-regular
  one-face maps}}.
\newblock {\slshape \bibinfo{journal}{European J. of Combinatorics}}
  \bibinfo{volume}{62}, pp. \bibinfo{pages}{167--177},
  \doi{10.1016/j.ejc.2016.12.004}.

\bibitemdeclare{article}{KraskoOmelchenko2019PartI}
\bibitem{KraskoOmelchenko2019PartI}
\bibinfo{author}{E.~\surnamestart Krasko\surnameend} \&
  \bibinfo{author}{A.~\surnamestart Omelchenko\surnameend}
  (\bibinfo{year}{2019}): \emph{\bibinfo{title}{Enumeration of r-regular maps
  on the torus. Part~I:}}.
\newblock {\slshape \bibinfo{journal}{Discrete Math.}}
  \bibinfo{volume}{342}(\bibinfo{number}{2}), pp. \bibinfo{pages}{584--599},
  \doi{10.1016/j.disc.2018.07.013}.

\bibitemdeclare{article}{KraskoOmelchenko2019PartII}
\bibitem{KraskoOmelchenko2019PartII}
\bibinfo{author}{E.~\surnamestart Krasko\surnameend} \&
  \bibinfo{author}{A.~\surnamestart Omelchenko\surnameend}
  (\bibinfo{year}{2019}): \emph{\bibinfo{title}{Enumeration of r-regular maps
  on the torus. Part II: Unsensed maps}}.
\newblock {\slshape \bibinfo{journal}{Discrete Math.}}
  \bibinfo{volume}{342}(\bibinfo{number}{2}), pp. \bibinfo{pages}{600--614},
  \doi{10.1016/j.disc.2018.09.004}.

\bibitemdeclare{article}{KraskoOmelchenko2021NoPlanarLoops}
\bibitem{KraskoOmelchenko2021NoPlanarLoops}
\bibinfo{author}{E.~\surnamestart Krasko\surnameend} \&
  \bibinfo{author}{A.~\surnamestart Omelchenko\surnameend}
  (\bibinfo{year}{2021}): \emph{\bibinfo{title}{Enumeration of rooted 4-regular
  maps without planar loops}}.
\newblock {\slshape \bibinfo{journal}{Discrete Math.}}
  \bibinfo{volume}{344}(\bibinfo{number}{8}), p. \bibinfo{pages}{112442},
  \doi{10.1016/j.disc.2021.112442}.

\bibitemdeclare{article}{Kuperberg2003}
\bibitem{Kuperberg2003}
\bibinfo{author}{G.~\surnamestart Kuperberg\surnameend} (\bibinfo{year}{1999}):
  \emph{\bibinfo{title}{What is a virtual link?}}
\newblock {\slshape \bibinfo{journal}{Algebr. Geom. Topol.}}
  \bibinfo{volume}{3}, pp. \bibinfo{pages}{587--591},
  \doi{10.2140/agt.2003.3.587}.

\bibitemdeclare{article}{OmelchenkoTorusMaps}
\bibitem{OmelchenkoTorusMaps}
\bibinfo{author}{A.~\surnamestart Omelchenko\surnameend}
  (\bibinfo{year}{2026}): \emph{\bibinfo{title}{Torus\_Maps: Reference
  implementation and datasets for genus-one projection and diagram enumeration
  on the torus}}.
\newblock \urlprefix\url{https://github.com/avo-travel/Torus_Maps}.

\bibitemdeclare{article}{Omelchenko2007}
\bibitem{Omelchenko2007}
\bibinfo{author}{V.~R.~Meshkov \surnamestart S.~A.~Grishanov\surnameend} \&
  \bibinfo{author}{A.~V. \surnamestart Omelchenko\surnameend}
  (\bibinfo{year}{2007}): \emph{\bibinfo{title}{Kauffman-type polynomial
  invariants for doubly periodic structures}}.
\newblock {\slshape \bibinfo{journal}{Journal of Knot Theory and Its
  Ramifications}} \bibinfo{volume}{16}(\bibinfo{number}{06}), pp.
  \bibinfo{pages}{779--788}, \doi{10.1142/S021821650700549X}.

\bibitemdeclare{article}{Grishanov2009a}
\bibitem{Grishanov2009a}
\bibinfo{author}{V.~Meshkov \surnamestart S.~Grishanov\surnameend} \&
  \bibinfo{author}{A.~\surnamestart Omelchenko\surnameend}
  (\bibinfo{year}{2009}): \emph{\bibinfo{title}{A Topological Study of Textile
  Structures. Part I: An Introduction to Topological Methods}}.
\newblock {\slshape \bibinfo{journal}{Textile Res. J.}} \bibinfo{volume}{79},
  pp. \bibinfo{pages}{702 -- 713}, \doi{10.1177/0040517508095600}.

\bibitemdeclare{article}{Grishanov2009b}
\bibitem{Grishanov2009b}
\bibinfo{author}{V.~Meshkov \surnamestart S.~Grishanov\surnameend} \&
  \bibinfo{author}{A.~\surnamestart Omelchenko\surnameend}
  (\bibinfo{year}{2009}): \emph{\bibinfo{title}{A Topological Study of Textile
  Structures. Part II: Topological Invariants in Application to Textile
  Structures}}.
\newblock {\slshape \bibinfo{journal}{Textile Res. J.}} \bibinfo{volume}{79},
  pp. \bibinfo{pages}{822--836}, \doi{10.1177/0040517508096221}.

\bibitemdeclare{book}{LandoZvonkin2004}
\bibitem{LandoZvonkin2004}
\bibinfo{author}{Sergei \surnamestart S.~K.~Lando\surnameend} \&
  \bibinfo{author}{A.~K. \surnamestart Zvonkin\surnameend}
  (\bibinfo{year}{2004}): \emph{\bibinfo{title}{Graphs on Surfaces and Their
  Applications}}.
\newblock \bibinfo{publisher}{Springer Berlin Heidelberg},
  \bibinfo{address}{Berlin, Heidelberg}, \doi{10.1007/978-3-540-38361-1_1}.

\bibitemdeclare{article}{Tait1877}
\bibitem{Tait1877}
\bibinfo{author}{P.~G. \surnamestart Tait\surnameend} (\bibinfo{year}{1877}):
  \emph{\bibinfo{title}{On knots I, II, III}}.
\newblock {\slshape \bibinfo{journal}{Trans. Roy. Soc. Edinburgh}}
  \bibinfo{volume}{28}, pp. \bibinfo{pages}{35--79}.

\end{thebibliography}

\end{document}